\newtheorem{Theorem}{Theorem}
\newtheorem{Lemma}{Lemma}
\newtheorem{Problem}{Problem}
\newtheorem{Remark}{Remark}
\newtheorem{Assumption}{Assumption}
\def\BibTeX{{\rm B\kern-.05em{\sc i\kern-.025em b}\kern-.08em
    T\kern-.1667em\lower.7ex\hbox{E}\kern-.125emX}}
\newcommand{\m}[1]{\boldsymbol{#1}}
\newcommand{\mc}[1]{\mathcal{#1}}
\newcommand{\mb}[1]{\mathbb{#1}}
\newcommand{\mcl}[1]{\mathcal{\m{#1}}}
\begin{document}

\title{Decentralized sliding-mode control laws for the bearing-based formation tracking problem
\thanks{$^*$Corresponding author: \texttt{minh.trinhhoang@hust.edu.vn}}
}

\author{\IEEEauthorblockN{Dung Van Vu}
\IEEEauthorblockA{\textit{Flight Instrument Center} \\
\textit{Viettel High Technology Industries Corporation}\\
Hanoi, Vietnam \\
0000-0002-1351-6191}
\and
\IEEEauthorblockN{Minh Hoang Trinh$^*$}
\IEEEauthorblockA{\textit{School of Electrical Engineering} \\
\textit{Hanoi University of Science and Technology (HUST)}\\
Hanoi, Vietnam \\
0000-0001-5736-6693}
}

\maketitle

\begin{abstract}
This paper studies the time-varying bearing-based tracking of leader-follower formations. The desired constraints between agents are specified by bearing vectors, and several leaders are moving with a bounded reference velocity. Each followers can measure the relative positions of its neighbors, its own velocities, and receive information from their neighbors. Under the assumptions that the desired formation is infinitesimally bearing rigid and the local reference frames of followers are aligned with each other, two control laws are presented in this paper based on sliding mode control approach. Stability analyses are given based on Lyapunov stability theory and supported by numerical simulations. 
\end{abstract}

\begin{IEEEkeywords}
Multi-agent systems, Formation control, Bearing rigidity, Bearing-based approach, Tracking control, Sliding mode control
\end{IEEEkeywords}

\section{Introduction}
In recent years, there is much research on multi-agent formation control due to its applications in wide-range fields such as swarms of unmanned aerial vehicles (UAV), robot formations, and networks of satellites \cite{anderson2008rigid,oh2015survey}. By cooperative decision-making, distributed control of each agent in a formation can be performed without a central brain/computer, which reduces the production costs. For example, a team of low-cost UAVs can survey and monitor a wider area than a higher-cost UAV does.

According to the measured, the controlled variables and the desired constraints between agents \cite{oh2015survey}, the formation control problem is subdivided into four general categories, including position-, displacement-\cite{Cao2011TAC}, distance-\cite{Cao2011,Chen2017global,Vu2020ICERA,Vu2020ICCAS}, and bearing-based control \cite{Bishop2011a,Eren2012,zhao2019bearingMag}. In the literature, many papers considered the displacement-based formation control, which assumes that agents can measure the relative positions of their neighbors and try to satisfy the desired displacement constraints  \cite{oh2015survey,wang2019integral,yu2016second,pilloni2013finite}. Various control approaches have been proposed to solve the displacement-based formation, such as the consensus law, PID control law, and sliding mode control \cite{wang2019integral,yu2016second,pilloni2013finite}. In the displacement-based formation, it is required that the agents being equipped with expensive sensors or a common reference frame so that they can sense their relative positions. Hence, the distance-based formation has been studied, which relaxes the requirement of a common reference frame. However, it is well-known that undesired equilibria exist in distance-based formation control \cite{Sun2014IFAC}, and (almost) global stabilization can only be shown for very specific formations, such as three-agent \cite{Cao2011} or Laman-generated formations \cite{Chen2017global} in 2-dimensional space or tetrahedron formation in 3-dimensional space \cite{Park2014stability}. With additional sensing capability such as relative attitude, the distance-based formation control can be converted to displacement-based, for examples, see \cite{Oh2013OA,Montijano2016}. Another approach is bearing-based formation, in which the desired constraints are given as a set of bearing vectors \cite{Eric2014,Zhao2015,trinh2018bearing,Tron2016distributed}. The bearing vector contains directional information between agents and can be obtained from cameras attached on the body of each agents. In a bearing-only formation control problem, the agents can only sense the bearing vectors with regard with its neighboring agents. In this paper, by bearing-based formation control, we assume that the agents can sense their displacements and control their positions so that a set of desired bearing vectors are satisfied. Several works were conducted in the literature to tackle this problem based on the notion of bearing rigidity theory \cite{Zhao2015,li2018bearing,li2020adaptive}. The authors in \cite{tang2020bearing} proposed a bearing-persistent excitation condition for reducing number of measurements between agents such that a desired formation is still achieved. The bearing-based maneuver problems, where the agents must achieve the desired bearings and move in a same velocity have been recently studied \cite{zhao2020bearing,zhao2019bearing,trinh2021robust,li2019bearing}. However, the above-mentioned works assume that the leaders' velocity is constant, which reduces its applicability  \cite{zhao2020bearing,zhao2019bearing,trinh2021robust,li2019bearing}. It is also worth noting that bearing-only formation tracking with time-varying leaders' velocity has been studied in \cite{Trinh2021LCSS}. However, \cite{Trinh2021LCSS} is restricted to a directed leader-follower formation \cite{Trinh2021LCSS} and single-integrator modeled agents.  

In this paper, we study the bearing-based formation tracking problem for leader-follower formations. The desired formation is specified by a set of bearing vectors and satisfies the infinitesimally bearing rigid condition \cite{Zhao2015}. There exist at least two leaders moving with a common bounded time-varying velocity. It is assumed that leaders can measure their positions and act as moving beacons during the maneuver. The remaining agents, called followers, modeled by double-integrator and have pre-aligned local reference frames. The follower can measure their own velocities and the displacements with a set of neighboring agents. Two sliding mode formation tracking control laws are introduced to impose followers to move with the unknown common velocity and hold the desired bearing vectors. The first control law directly uses the displacements to impose the followers satisfy the bearing constraints directly. The second control law assumes that the agents can communicate with their neighbors. Each agent uses  displacements, relative velocities and communicated information to estimate the true position, the desired position based on  sliding mode estimator. A PD-like control law is used for agent to track the estimated position in the formation. Although this approach requires more information, it can be modified to tackle practical issues such as input saturation, control failure, etc. 

The remainder is organized as follows. Background and problem formulation are presented in Section 2.  Section 3 introduces the main results including the proposed control laws and theoretical analysis.  Simulation results are shown in Section 4 and Section 5 concludes the paper.

Notations. Let $\m{x} = [x_1, \ldots, x_d]^\top$ denote a column vector in $\mb{R}^d$. We denote $|\m{x}| = [|x_1|, \ldots, |x_d|]^T$. Let $\|\cdot\|$ denote the 2-norm or Euclidean norm, $\|\m{x}\| = \sqrt{\sum_{i=1}^n x_i^2}$. Further, we use $\|\m{x}\|_1 = \sum_{i=1}^d |x_i|$ and $\|\m{x}\|_{\infty}$ to denote the 1-norm and $\infty$-norm of $\m{x}$, respectively. For $y \in \mb{R}$, the signum function is denoted by $\text{sign}(y)$. When applying to a vector, this notation is understood as an element-wise operator, i.e., $\text{sign}(\m{x}) = [\text{sign}(x_1), \ldots, \text{sign}(x_d)]^\top$. 
\section{Problem formulation}
Consider an $n$-agent leader-follower formation in $\mathbb{R}^d$, $d\ge 2$, where there exist $l\ge 2$ leaders $i\in \mc{L}=\{1,\ldots,l\}$ and $f$ followers $i\in \mc{F}=\{l+1,\ldots,n\}$. In a common global reference frame $^g\Sigma$, we use $\m{p}_i$ and $\m{v}_i \in \mathbb{R}^d$ to denote the position and the velocity of agent $i$. Let each agent $i$ maintain a local reference frame $^i\Sigma$, and it is assumed that these frames are all aligned. Thus, the relative quantities such as the displacements $\m{p}_i^i - \m{p}_j^i$, relative velocities $\m{v}_i^i -\m{v}_j^i$ are the same when measured from any local frame of each agent in the system. Without loss of generality, we assume that these local frames are aligned with $^g\Sigma$, and thus from now on, all equations will be expressed in $^g\Sigma$. Let leaders move with a common velocity $\m{v}_c$, the equation for each leader is given by 
\begin{align}\label{Motion leader}
    \dot{\m{p}}_i= \m{v}_c,~ i =1, \ldots, l.
\end{align}
The followers' models are given by 
\begin{align}\label{Motion follower}
    \dot{\m{p}}_i=\m{v}_i,~ \dot{\m{v}}_i=\m{u}_i, \; i= l+1, \ldots, n,
\end{align}
where $\m{u}_i$ denotes the control input of agent $i$ that will be later designed. 
We use a directed graph $\mc{G}=(\mc{V},\mc{E})$ to characterize the interaction among the agents in the formation, where the agents’ set is $\mc{V}=\mc{L}\cup \mc{F}$ and the edges’ set is $\mc{E}\subset \mc{V}^2$. An edge $(i,j) \in \mc{E}$ implies that $i$ can sense information from agent $j$. The neighbor set $\mc{N}_i$ consists of agents $j$ such that agent $i$ can measure the relative position to $j$ and receive information from $j$. In this paper, we consider a graph $\mc{G}$ such that communication between followers is bidirectional while the communication is unidirectional from follower to leader. The Laplacian matrix $\m{L} = [l_{ij}] \in \mb{R}^{n\times n}$ corresponding to the graph $\mc{G}$ is defined as 
\begin{align*}
    l_{ij} = \begin{cases} 
                -1, &\mbox{if } (i,j) \in \mc{E}, \ i\ne j, \\ 
                ~0, & \mbox{if } (i,j) \notin \mc{E}, i \ne j, \\
                -\sum_{j=1}^n l_{ij}, & \mbox{if } i=j.
             \end{cases}
\end{align*}
Thus, we can rewrite the matrix $\m{L}$ as 
\begin{align} \label{eq:Laplacian}
    \m{L} = \begin{bmatrix}
            \m{0}_{l\times l} & \m{0}_{l\times f}\\
            \m{L}_{fl} & \m{L}_{ff}
            \end{bmatrix}.
\end{align}
where $\m{L}_{ff} \in \mb{R}^{f\times f}$ is a symmetric, positive definite matrix \cite{Olfati2007consensuspieee}. 

By bearing-based, we imply that the agents can sense the relative positions $\m{p}_j-\m{p}_i$ and control the bearing vectors $\m{g}_{ij}=\frac{\m{p}_j-\m{p}_i}{\Vert \m{p}_j-\m{p}_i\Vert}$  to desired time-invariant bearing vectors $\m{g}_{ij}^*, \forall j\in \mc{N}_i$. The following assumption is adopted in this paper:
\begin{Assumption}\label{Asumption 1}
There exists a desired realization $\m{p}^*=\text{vec}\left(\m{p}_1,\ldots,\m{p}_n\right)$ such that $\frac{\m{p}_j^*-\m{p}_i^*}{\Vert \m{p}_j^*-\m{p}^*_i\Vert}=\m{g}_{ij}^*$ and $\m{p}_i^*=\m{p}_i,~\forall i\in \mc{L}$. The desired formation $\left(\mc{G},\m{p}^*\right)$ is infinitesimally bearing rigid \cite{zhao2019bearingMag}. 
\end{Assumption}
The bearing Laplacian matrix $\mcl{B}=\left[\mcl{B}_{ij}\right]$ is defined as $\mcl{B}_{ii}=\sum_{j\in N_i} \m{P}_{\m{g}_{ij}^*}$, $\mcl{B}_{ij}=-\m{P}_{\m{g}_{ij}^*}$ if $j\in \mc{N}_i$, and $\mcl{B}_{ij}=\m{0}_{d\times d}$ in other cases, where the projection matrix $\m{P}_{\m{g}}=\m{I}_d-\m{g}\m{g}^\top$. Rewriting the bearing matrix $\mcl{B}$ as $\mcl{B} = \left(\begin{matrix}\mcl{B}_{ll}&\mcl{B}_{lf} \\ \mcl{B}_{fl}& \mcl{B}_{ff}\\ \end{matrix}\right)$, the following lemma holds:
\begin{Lemma}\label{Lemma 1}
Suppose that Assumption \ref{Asumption 1} holds and $l\ge 2$. The matrix $\mcl{B}_{ff}\in\mathbb{R}^{df\times df}$ is symmetric, positive-definite. %and $\text{ker}\left(\mcl{B}\right)=\text{im}\left(\m{p}^*,\m{1}_n\otimes\mathbb{R}^d\right)$.
\end{Lemma}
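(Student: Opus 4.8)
The plan is to settle symmetry first and then reduce positive-definiteness of $\mcl{B}_{ff}$ to a kernel computation driven by Assumption~\ref{Asumption 1}. Symmetry is immediate: each projection $\m{P}_{\m{g}_{ij}^*}=\m{I}_d-\m{g}_{ij}^*(\m{g}_{ij}^*)^\top$ is symmetric and obeys $\m{P}_{\m{g}_{ij}^*}=\m{P}_{\m{g}_{ji}^*}$ because $\m{g}_{ji}^*=-\m{g}_{ij}^*$; since follower-follower sensing is bidirectional, the off-diagonal blocks satisfy $\mcl{B}_{ij}=\mcl{B}_{ji}=-\m{P}_{\m{g}_{ij}^*}$ for $i,j\in\mc{F}$, so $\mcl{B}_{ff}$ is symmetric. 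I would also note that $\mcl{B}_{ff}$ coincides with the follower block $\overline{\mcl{B}}_{ff}$ of the symmetric bearing Laplacian $\overline{\mcl{B}}$ of the underlying undirected framework (the two agree precisely because follower-follower edges are bidirectional), and that $\overline{\mcl{B}}$ is positive semidefinite with quadratic form
\begin{align*}
\m{x}^\top \overline{\mcl{B}}\,\m{x}=\sum_{(i,j)\in\mc{E}}(\m{x}_i-\m{x}_j)^\top \m{P}_{\m{g}_{ij}^*}(\m{x}_i-\m{x}_j)\ge 0.
\end{align*}
As a principal submatrix of a symmetric positive-semidefinite matrix, $\mcl{B}_{ff}$ is symmetric positive semidefinite, so it remains only to show it has trivial kernel.

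For this I would invoke the bearing rigidity theory of \cite{zhao2019bearingMag}: under Assumption~\ref{Asumption 1}, infinitesimal bearing rigidity is equivalent to $\ker\overline{\mcl{B}}=\mathrm{span}\{\m{1}_n\otimes\m{I}_d,\ \m{p}^*\}$, i.e. the only bearing-preserving infinitesimal motions are uniform translations and a single uniform scaling. Taking any $\m{x}_f\in\mb{R}^{df}$ with $\mcl{B}_{ff}\m{x}_f=\m{0}$, I would form the padded vector $\tilde{\m{x}}=[\m{0}_{dl}^\top,\ \m{x}_f^\top]^\top$. Because its leader entries vanish, the cross terms drop out and $\tilde{\m{x}}^\top\overline{\mcl{B}}\,\tilde{\m{x}}=\m{x}_f^\top\mcl{B}_{ff}\m{x}_f=0$; positive semidefiniteness then forces $\tilde{\m{x}}\in\ker\overline{\mcl{B}}$.

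Writing $\tilde{\m{x}}=(\m{1}_n\otimes\m{I}_d)\m{c}+a\,\m{p}^*$ for some $\m{c}\in\mb{R}^d$ and $a\in\mb{R}$, the $i$-th block reads $\tilde{\m{x}}_i=\m{c}+a\,\m{p}_i^*$, which must vanish for every leader $i\in\mc{L}$. This is where the hypothesis $l\ge2$ is essential, and I expect it to be the crux: choosing two distinct leaders $i,j\in\mc{L}$ with $\m{p}_i^*\neq\m{p}_j^*$ and subtracting the relations $\m{c}+a\,\m{p}_i^*=\m{0}$ and $\m{c}+a\,\m{p}_j^*=\m{0}$ yields $a(\m{p}_i^*-\m{p}_j^*)=\m{0}$, hence $a=0$ and then $\m{c}=\m{0}$. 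Therefore $\tilde{\m{x}}=\m{0}$, so $\m{x}_f=\m{0}$ and $\ker\mcl{B}_{ff}=\{\m{0}\}$; combined with semidefiniteness, $\mcl{B}_{ff}$ is symmetric positive-definite.

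The main obstacle is the kernel characterization of $\overline{\mcl{B}}$ under infinitesimal bearing rigidity, which I would import from \cite{zhao2019bearingMag} rather than reprove; once it is in hand, the embedding and the two-leader elimination are routine. The one point to check carefully is that the two leaders occupy distinct positions in $\m{p}^*$ (a nondegeneracy of the realization, without which the bearings themselves are undefined), since that is exactly what removes the scaling mode $\m{p}^*$ and thereby forces triviality of $\ker\mcl{B}_{ff}$.
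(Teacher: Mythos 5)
Your proof is correct, and it is worth noting that the paper itself offers no proof of this lemma at all: it is stated as a known fact, imported from the bearing rigidity and network localizability literature (the result and the argument you reconstruct are essentially those of Zhao and Zelazo's work cited as \cite{Zhao2015,zhao2019bearingMag}). Your reconstruction is the standard one: symmetry of $\mcl{B}_{ff}$ from $\m{P}_{\m{g}_{ij}^*}=\m{P}_{\m{g}_{ji}^*}$ and bidirectional follower--follower edges; positive semidefiniteness of the full bearing Laplacian via its edge-sum quadratic form; the observation that a kernel vector of $\mcl{B}_{ff}$, padded with zeros on the leader blocks, annihilates the quadratic form and hence lies in $\ker\overline{\mcl{B}}$; the kernel characterization $\ker\overline{\mcl{B}}=\mathrm{span}\{\mathrm{range}(\m{1}_n\otimes\m{I}_d),\,\m{p}^*\}$ under infinitesimal bearing rigidity; and finally the elimination of the translation and scaling modes using two leaders pinned at distinct positions. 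Your care in distinguishing the directed bearing Laplacian of the paper from its symmetrized counterpart $\overline{\mcl{B}}$, and in checking that the two share the same follower--follower block, is a point the paper glosses over entirely.

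One fine point deserves correction. You justify the distinctness $\m{p}_i^*\neq\m{p}_j^*$ of two leaders by saying that otherwise ``the bearings themselves are undefined,'' but that reasoning only applies when the two leaders are \emph{adjacent} in $\mc{G}$; Assumption~\ref{Asumption 1} only forces adjacent agents to occupy distinct positions. For non-adjacent leaders, distinctness is a genuine additional nondegeneracy hypothesis, not a consequence of the assumptions: one can construct infinitesimally bearing rigid frameworks in which two non-adjacent vertices coincide (e.g., in $\mb{R}^2$, vertices $1,2$ both at the origin, $3$ at $(1,0)$, $4$ at $(0,1)$, with edges $\{1,3\},\{1,4\},\{2,3\},\{2,4\},\{3,4\}$), and if those coincident vertices are taken as the leaders, the scaling mode $\m{p}^*$ itself has vanishing leader blocks, so the padded vector survives your elimination step and $\mcl{B}_{ff}$ is in fact singular. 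So the lemma as stated implicitly requires two leaders at distinct desired positions --- a physically natural and mild assumption (satisfied in the paper's simulations), but one your argument genuinely needs, exactly as you suspected, and for a slightly different reason than the one you gave.
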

Next, we adopt the following assumptions about the sensing and capabilities of the agents.
\begin{Assumption}\label{Asumption 2}
Agents measure their velocities and relative positions to their neighbors. Leaders can measure their positions. Agents can exchange several variables with their neighbors. Follower $i$ knows the desired bearing vector $\m{g}_{ij}^*,~ \forall j\in \mc{N}_i$. The common velocity of $l$ leaders satisfy $\Vert \m{v}_c\Vert\le\delta_1$ and $\Vert \dot{\m{v}}_c\Vert\le\delta_2$, where two positive constants $\delta_1$ and $\delta_2$ are known by followers.
\end{Assumption}
We can now state the main problem that will be studied in this paper:
\begin{Problem}
Consider a formation whose motions are described by \eqref{Motion leader}-\eqref{Motion follower} and suppose that Assumptions \ref{Asumption 1}-\ref{Asumption 2} are held. Design $\m{u}_i$ such that $\m{g}_{ij}\to \m{g}^*_{ij},~\forall j\in \mc{N}_i$.
\end{Problem}
\section{Main results}
In this section, two control laws will be proposed to solve Problem 1. These control laws are decentralized in the sense that the agents can compute their own control input from local measured and communicated variables. Since the agents' local reference frames are required to be fixed and aligned with each other, these control laws are categorized as decentralized controllers instead of distributed ones. Note that one may adopt an alignment algorithms running in parallel with the proposed control laws to relax the assumption that all agents have been pre-aligned \cite{Oh2013OA,Montijano2016,aranda2016distributed}.
\subsection{Control without estimators}
\subsubsection{Proposed control law}
Select positive numbers $k_1$ and $k_2$ such that $k_2>\delta_2+k_1\delta_1$, the control law for follower $i \in \mc{F}$ is proposed as follows
\begin{align}
    \m{u}_i&=-k_1\m{v}_i-k_2\text{sign}\left(\m{s}_i\right),\label{Control 1}\\
    \m{s}_i&=\sum_{j\in \mc{N}_i} \m{P}_{\m{g}_{ij}^*}\ (\m{v}_i-\m{v}_j+k_1(\m{p}_i-\m{p}_j)).\label{SMC 1}
\end{align}
The controller $\m{u}_i$ is designed to guarantee the sliding variable $\m{s}_i\to \m{0}$ in finite time, thus the bearing constraints are also achieved in finite time. In \eqref{SMC 1}, the velocity term $\m{v}_i - \m{v}_j$ and the relative position $\m{p}_i-\m{p}_j$ are measured by the agent $i$. This implies that \eqref{SMC 1} is a decentralized control law.
\subsubsection{Stability analysis}
\begin{Lemma}\label{LemLem norm}
The following system
\begin{align}
    \dot{\m{x}}=-\m{A}\left(k\cdot\text{sign}\left(\m{x}\right)+ \m{1}_n\otimes \m{d}\right)
\end{align}
is finite time stable if $\m{A}$ is symmetric, positive-definite and the disturbance $\m{d}$ satisfies $\text{sup}_{t\ge 0}{\Vert \m{d}\left(t\right)\Vert}<k$.
\end{Lemma}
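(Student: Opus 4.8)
The plan is to establish finite-time convergence via a quadratic Lyapunov function tailored to the matrix $\m{A}$. Since $\m{A}$ is symmetric and positive-definite, so is $\m{A}^{-1}$, and I would take $V = \tfrac{1}{2}\m{x}^\top \m{A}^{-1}\m{x}$, which is positive-definite and radially unbounded. The reason for inserting $\m{A}^{-1}$ rather than the identity is that it will exactly cancel the $\m{A}$ appearing in the dynamics, so that the dominant decay term no longer depends on the (possibly unknown) spectrum of $\m{A}$.

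Differentiating along the trajectories, I would compute $\dot V = \m{x}^\top \m{A}^{-1}\dot{\m{x}} = -\m{x}^\top\big(k\,\text{sign}(\m{x}) + \m{1}_n\otimes\m{d}\big)$, where the $\m{A}^{-1}\m{A}=\m{I}$ cancellation is precisely the point of the chosen $V$. The first term equals $-k\,\m{x}^\top\text{sign}(\m{x}) = -k\|\m{x}\|_1$ by the definition of the element-wise sign. For the disturbance term I would invoke the H\"older duality between the $1$- and $\infty$-norms: $-\m{x}^\top(\m{1}_n\otimes\m{d}) \le \|\m{x}\|_1\,\|\m{1}_n\otimes\m{d}\|_\infty = \|\m{x}\|_1\,\|\m{d}\|_\infty \le \|\m{x}\|_1\,\|\m{d}\|$. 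Setting $\eta := k - \sup_{t\ge 0}\|\m{d}(t)\|$, which is strictly positive exactly by the hypothesis $\sup_t\|\m{d}\|<k$, the two contributions combine to give $\dot V \le -\eta\|\m{x}\|_1$.

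To convert this into a finite-time estimate I would chain the elementary bounds $\|\m{x}\|_1 \ge \|\m{x}\|$ and, from $V \le \tfrac{1}{2}\lambda_{\max}(\m{A}^{-1})\|\m{x}\|^2 = \tfrac{1}{2\lambda_{\min}(\m{A})}\|\m{x}\|^2$, the inequality $\|\m{x}\| \ge \sqrt{2\lambda_{\min}(\m{A})}\,\sqrt{V}$. This yields $\dot V \le -\kappa\sqrt{V}$ with $\kappa := \eta\sqrt{2\lambda_{\min}(\m{A})} > 0$. A standard comparison argument then closes the proof: since $\tfrac{d}{dt}\sqrt{V} = \dot V/(2\sqrt{V}) \le -\kappa/2$ whenever $V>0$, integrating gives $\sqrt{V(t)} \le \sqrt{V(0)} - \tfrac{\kappa}{2}t$, so $V$, and hence $\m{x}$, reaches the origin no later than $t^* = 2\sqrt{V(0)}/\kappa$ and remains there thereafter.

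The main obstacle I anticipate is rigour around the discontinuity of $\text{sign}(\m{x})$: the right-hand side is not Lipschitz, so solutions must be understood in the Filippov sense and $\dot V$ interpreted as a set-valued (Clarke) derivative. I would therefore need to verify that $\dot V \le -\eta\|\m{x}\|_1$ holds for every element of the Filippov set-valued map, the only delicate points being the hyperplanes $x_j = 0$, where $\text{sign}(x_j)$ is replaced by the interval $[-1,1]$. There the associated term in $-k\,\m{x}^\top\text{sign}(\m{x})$ vanishes because $x_j = 0$, so the bound is preserved, and the rest reduces to routine linear-algebra and comparison-lemma bookkeeping.
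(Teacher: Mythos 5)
Your proposal is correct and follows essentially the same route as the paper's own proof: the Lyapunov function $V=\tfrac{1}{2}\m{x}^\top\m{A}^{-1}\m{x}$, the cancellation $\m{A}^{-1}\m{A}=\m{I}$, the H\"older bound $-\m{x}^\top(\m{1}_n\otimes\m{d})\le\|\m{x}\|_1\|\m{d}\|_\infty\le\|\m{x}\|_1\|\m{d}\|$, the chain $\|\m{x}\|_1\ge\|\m{x}\|\ge\sqrt{2\lambda_{\min}(\m{A})V}$, and the resulting differential inequality $\dot V\le-\kappa\sqrt{V}$, which the paper closes by citing its appendix finite-time lemma while you integrate the comparison inequality directly. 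Your added attention to Filippov solutions and the behavior on the hyperplanes $x_j=0$ is a rigor point the paper glosses over, but it does not change the argument.
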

\begin{proof}
Taking the derivative of the Lyapunov function $V=0.5\m{x}^{\top}\m{A}^{-1}\m{x}$ yields  
\begin{align}
    \dot{V}=-k\Vert \m{x}\Vert_1-\m{x}^{\top}\left(\m{1}_n\otimes \m{d}\right).
\end{align}
The norm inequalities imply that
\begin{align}\label{Lemma norm}
    -\m{x}^\top\left(\m{1}_n\otimes \m{d}\right)\le \Vert \m{x}\Vert_1\Vert \m{d}\Vert_\infty\le \Vert \m{x}\Vert_1\Vert \m{d}\Vert.
\end{align}
Thus, $\xi = k-\text{sup}_{t\ge 0}{\Vert \m{d} \left(t\right)\Vert}>0$ satisfies $\dot{V}\le-\xi\Vert \m{x} \Vert_1$. The definition of  $V$ implies that
\begin{align}
    2V\le\lambda_{\max}\left(\m{A}^{-1}\right)\Vert \m{x}\Vert^2\le\lambda_{\max}\left(\m{A}^{-1}\right)\Vert \m{x}\Vert^2_1.
\end{align}
Combining \eqref{Lemma norm}, it holds $\dot{V} \le-\xi\sqrt{2\lambda_{\max}^{-1}\left(\m{A}^{-1}\right)V}$. Thus, based on Lemma~\ref{lem:finite-time} in the Appendix, we conclude that $V=0$ in finite time, and this implies that $\m{x}=\m{0}$ in finite time.
\end{proof}
\begin{Theorem}\label{Theorem 1}
Under control law \eqref{Control 1}, $\m{g}_{ij}\to \m{g}_{ij}^*$, $\m{p}_i \to \m{p}_i^*$, and $\m{v}_i\to \m{v}_c$ as $t\to\infty$. 
\end{Theorem}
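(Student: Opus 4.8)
The plan is to reduce the closed loop to the setting of Lemma~\ref{LemLem norm} by recognizing the sliding variables as a bearing-Laplacian image of an auxiliary state, and then to analyze the residual dynamics once the sliding surface is reached.

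First I would introduce the auxiliary variable $\m{x}_i=\m{v}_i+k_1\m{p}_i$, so that $\m{v}_i-\m{v}_j+k_1(\m{p}_i-\m{p}_j)=\m{x}_i-\m{x}_j$ and hence, from \eqref{SMC 1}, $\m{s}_i=\sum_{j\in\mc{N}_i}\m{P}_{\m{g}_{ij}^*}(\m{x}_i-\m{x}_j)$. Stacking the followers' sliding variables into $\m{s}$ and writing $\m{x}_f,\m{x}_l$ for the follower/leader parts of $\m{x}=\text{vec}(\m{x}_1,\ldots,\m{x}_n)$, the block structure of $\mcl{B}$ gives $\m{s}=\mcl{B}_{ff}\m{x}_f+\mcl{B}_{fl}\m{x}_l$.

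Next I would treat the reaching phase. Differentiating, $\dot{\m{x}}_i=\m{u}_i+k_1\m{v}_i=-k_2\,\text{sign}(\m{s}_i)$ for each follower (the $k_1\m{v}_i$ terms cancel by the design of \eqref{Control 1}), whereas $\dot{\m{x}}_i=\dot{\m{v}}_c+k_1\m{v}_c$ for each leader. Using the defining property $\mcl{B}(\m{1}_n\otimes\m{a})=\m{0}$ of the bearing Laplacian (every block row sums the projections of the constant difference $\m{a}-\m{a}$), whose block form reads $\mcl{B}_{fl}(\m{1}_l\otimes\m{a})=-\mcl{B}_{ff}(\m{1}_f\otimes\m{a})$, I obtain
\begin{align}
\dot{\m{s}}=-\mcl{B}_{ff}\big(k_2\,\text{sign}(\m{s})+\m{1}_f\otimes(\dot{\m{v}}_c+k_1\m{v}_c)\big).
\end{align}
This is precisely the system of Lemma~\ref{LemLem norm} with $\m{A}=\mcl{B}_{ff}$ (symmetric positive definite by Lemma~\ref{Lemma 1}), $k=k_2$, and disturbance $\m{d}=\dot{\m{v}}_c+k_1\m{v}_c$. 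Assumption~\ref{Asumption 2} bounds $\|\m{d}\|\le\delta_2+k_1\delta_1<k_2$, so $\sup_{t\ge0}\|\m{d}\|<k_2$ holds and Lemma~\ref{LemLem norm} yields $\m{s}=\m{0}$ in some finite time $T$.

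Finally I would analyze the sliding phase for $t\ge T$. Here the desired trajectory must be well defined: since the formation is infinitesimally bearing rigid and the $l\ge2$ leaders move together with $\m{v}_c$, the only bearing-preserving motion consistent with the leaders is a rigid translation, so $\dot{\m{p}}_i^*=\m{v}_c$ for all $i$. Introducing $\m{x}_i^*=\m{v}_c+k_1\m{p}_i^*$, the projection identity $\m{P}_{\m{g}_{ij}^*}(\m{p}_i^*-\m{p}_j^*)=\m{0}$ gives $\mcl{B}\m{x}^*=\m{0}$, i.e. $\mcl{B}_{ff}\m{x}_f^*+\mcl{B}_{fl}\m{x}_l=\m{0}$ (as $\m{x}_l=\m{x}_l^*$ for leaders). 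Subtracting this from $\m{s}=\mcl{B}_{ff}\m{x}_f+\mcl{B}_{fl}\m{x}_l=\m{0}$ and inverting $\mcl{B}_{ff}$ gives $\m{x}_f=\m{x}_f^*$; writing the errors $\tilde{\m{p}}_i=\m{p}_i-\m{p}_i^*$, $\tilde{\m{v}}_i=\m{v}_i-\m{v}_c$, this is $\tilde{\m{v}}_i=-k_1\tilde{\m{p}}_i$. Since $\dot{\tilde{\m{p}}}_i=\m{v}_i-\m{v}_c=\tilde{\m{v}}_i$, the followers satisfy $\dot{\tilde{\m{p}}}_i=-k_1\tilde{\m{p}}_i$, which is exponentially stable for $k_1>0$; hence $\tilde{\m{p}}_i\to\m{0}$ and $\tilde{\m{v}}_i\to\m{0}$, so $\m{p}_i\to\m{p}_i^*$, $\m{v}_i\to\m{v}_c$, and consequently $\m{g}_{ij}\to\m{g}_{ij}^*$. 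The step I expect to be the main obstacle is making this sliding-phase argument rigorous: justifying $\dot{\m{p}}_i^*=\m{v}_c$ (the rigid-translation claim, which rests on infinitesimal bearing rigidity together with $l\ge2$ pinned leaders to exclude scaling) and ensuring that along the entire trajectory the agents remain noncoincident so that the bearings $\m{g}_{ij}$ and the projections $\m{P}_{\m{g}_{ij}^*}$ stay well defined; the reaching-phase reduction is, by contrast, essentially algebraic once $\m{x}_i$ is introduced.
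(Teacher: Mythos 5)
Your proposal is correct and follows essentially the same route as the paper: both reduce the closed loop to $\dot{\m{s}}=-\mcl{B}_{ff}\left(k_2\,\text{sign}(\m{s})+\m{1}_f\otimes(\dot{\m{v}}_c+k_1\m{v}_c)\right)$, invoke Lemma~\ref{Lemma 1} and Lemma~\ref{LemLem norm} for finite-time reaching, and then observe that on the sliding surface $\m{s}=\m{0}$ the position error obeys a stable first-order equation (your $\m{x}_i=\m{v}_i+k_1\m{p}_i$ is just a repackaging of the paper's $\m{\phi}_F=\mcl{B}_{ff}(\m{p}_F-\m{p}_F^*)$ with $\m{s}_F=\dot{\m{\phi}}_F+k_1\m{\phi}_F$). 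If anything, you are more careful than the paper on the points it glosses: you explicitly justify $\dot{\m{p}}_i^*=\m{v}_c$ via the row-sum identity $\mcl{B}_{fl}(\m{1}_l\otimes\m{a})=-\mcl{B}_{ff}(\m{1}_f\otimes\m{a})$, and you correctly identify the sliding-phase convergence of $\m{p}_F\to\m{p}_F^*$ as exponential rather than finite-time, which is what the theorem's asymptotic claim actually requires.
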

\begin{proof}
Let $\m{\phi}_i=\sum_{j\in \mc{N}_i}{\m{P}_{\m{g}_{ij}^*}\left(\m{p}_{i}-\m{p}_{j}\right)}$, then $\m{\phi}_i=\m{0}_d$ if $i$ is a leader. Denoting $\m{\phi}=\text{vec}\left(\m{\phi}_1,\ldots,\m{\phi}_n\right)$ and $\m{\phi}_F=\text{vec}\left(\m{\phi}_{l+1},\ldots,\m{\phi}_n\right)$, it holds
$\m{\phi}=\m{ZBp}$, where $\m{Z}=\text{blkdiag}(\m{0}_{dl\times dl},\m{I}_{dl})$. Based on Lemma \ref{Lemma 1}, we have $\m{\phi}=\m{Z}\m{B}(\m{p}-\m{p}^*)$, or
\begin{align}
\left(\begin{matrix} \m{0}_{dl}\\\m{\phi}_F\\\end{matrix}\right)=\left(\begin{matrix}\m{0}_{dl\times dl}& \m{0}\\ \m{0}& \m{I}_{df}\\\end{matrix}\right)\left(\begin{matrix}\m{B}_{ll}&\m{B}_{lf}\\\m{B}_{fl}&\m{B}_{ff}\\\end{matrix}\right)\left(\begin{matrix} \m{0}_{dl}\\ \m{p}_f-\m{p}_f^*\\\end{matrix}\right)	
\end{align}
Thus, we obtain $\m{\phi}_F=\mcl{B}_{ff}\left(\m{p}_F-\m{p}_F^\ast\right)$. From the definition of $\m{\phi}_i$ and \eqref{SMC 1}, by denoting $\m{s}_F=\text{vec}(\m{s}_{l+1},\ldots,\m{s}_n)$, it holds $\m{s}_F={\dot{\m{\phi}}}_F+k_1\m{\phi}_F$. Now, we will prove that the sliding variable $\m{s}_F\to \m{0}$ as $t\to\infty$. Using  $\m{\phi}_F=\mcl{B}_{ff}\left(\m{p}_F-\m{p}_F^*\right)$, rewriting $\m{s}_F$ yields
\begin{align}
   \m{s}_F=\mcl{B}_{ff}\left(\m{v}_F-\m{v}_F^*+k_1\left(\m{p}_F-\m{p}_F^*\right)\right), 
\end{align}
where $\m{v}_F^*={\dot{\m{p}}}_F^*=\m{1}_f\otimes \m{v}_c$ and $\m{v}_F=\text{vec}(\m{v}_{l+1},\ldots,\m{v}_n)$. Taking the derivative of the sliding variable yields
\begin{align}
    {\dot{\m{s}}}_F=-\mcl{B}_{ff}\left(k_2\text{sign}\left(\m{s}_F\right)+\m{1}_f\otimes\m{\omega}_c\right),
\end{align}
where $\m{\omega}_c= \dot{\m{v}}_c + k_1\m{v}_c$ satisfies $\Vert \m{\omega}_c\Vert\le \delta_2+k_1\delta_1$. Based on Lemma \ref{LemLem norm}, we obtain $\m{s}_F\to \m{0}$; and thus, $\m{\phi}_F\to \m{0}$. Note that $\mcl{B}_{ff}$ is positive-definite, we conclude that $\m{p}_F \to \m{p}_F^*$ in finite time. Combining with $\m{s}_F \to \m{0}$, we have $\m{v}_F\to \m{1}_f \otimes \m{v}_c$, and this means $\m{v}_i \to \m{v}_c$ in finite time. The desired bearing constraints are attained directly from $\m{p}_F \to \m{p}_F^*$.
\end{proof}
\subsection{Control with estimators}
\subsubsection{Proposed control law}
Select positive numbers $k_q,~q=1,\ldots,6$ such that $k_5>\delta_2+k_4\delta_1$. Let ${\bar{\m{p}}}_i$, ${\hat{\m{p}}}_i$, $\m{p}_i$ be the estimate of the desired position, the estimate of current position of agent $i$. For each leader $i=1, \ldots, l$, we have ${\bar{\m{p}}}_i={\hat{\m{p}}}_i=\m{p}_i$ and ${\bar{\m{v}}}_i=\m{v}_c$. The PD-like control law for follower $i$ is written as 
\begin{align}\label{Control 2}
   \m{u}_i={\bar{\m{u}}}_i+k_1\left({\bar{\m{p}}}_i-{\hat{\m{p}}}_i\right)+k_2({\bar{\m{v}}}_i- \hat{\m{v}}_i), 
\end{align}
where ${\hat{\m{p}}}_i$ is estimated position, ${\bar{\m{p}}}_i$ and ${\bar{\m{v}}}_i$ denote the estimates of the desired position and the velocity of agent $i$. These variables are computed according to 
\begin{align}
    {\dot{\hat{\m{p}}}}_i&=\hat{\m{v}}_i \label{Consensus law1}\\
     {\dot{\hat{\m{v}}}}_i&= \m{u}_i + k_3\sum_{j\in\mc{N}_i}({\hat{\m{p}}}_j-{\hat{\m{p}}}_i+\m{p}_j-\m{p}_i) \nonumber \\& \qquad \qquad + k_6\sum_{j\in\mc{N}_i}({\hat{\m{v}}}_j-{\hat{\m{v}}}_i+\m{v}_j-\m{v}_i) , 
\label{Consensus law2}\\
    {\dot{\bar{\m{p}}}}_i&={\bar{\m{v}}}_i,\label{SMC 2a}\\
    {\dot{\bar{\m{v}}}}_i&={\bar{\m{u}}}_i=-{k_4\bar{\m{v}}}_i-k_5\text{sign}(\m{s}_i),\\
    \m{s}_i&=\sum_{j\in \mc{N}_i}{\m{P}_{\m{g}_{ij}^\ast}\left({\bar{\m{v}}}_i-{\bar{\m{v}}}_j+k_4\left({\bar{\m{p}}}_i-{\bar{\m{p}}}_j\right)\right)}.\label{SMC 2b}
\end{align}
The estimator \eqref{Consensus law1}--\eqref{Consensus law2} is a consensus law, which estimates the true position $\m{p}_i$ by the estimation ${\hat{\m{p}}}_i$ based on the relative position and relative velocity of its neighbors. Note that each agent can measure its velocity $\m{v}_i$ and receive the estimated position ${\hat{\m{p}}}_j$ of its neighbors. The sliding mode-based control law \eqref{SMC 2a}-\eqref{SMC 2b} computes the desired position of agent $i$ based on the desired bearing vectors and information obtained from its neighbors including ${\bar{\m{p}}}_j$ and ${\bar{\m{v}}}_i$. Next, we will prove that ${\hat{\m{p}}}_i\to \m{p}_i$, ${\bar{\m{p}}}_i\to \m{p}_i^*$, and then $\m{p}_i\to{\bar{\m{p}}}_i$. Compared with the first approach, this control law is more complicated, however, this can be extended to solve the bearing-based formation problem with the realistic issues such as input saturation, uncertainty parameters or optimization control, etc.
\subsubsection{Stability analysis}
\begin{Theorem}
Under control law \eqref{Control 2}, $\m{g}_{ij}\to \m{g}_{ij}^*$, $\m{p}_i\to \m{p}_i^*$, and $\m{v}_i\to \m{v}_c$ as $t\to\infty$. 
\end{Theorem}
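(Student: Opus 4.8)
The plan is to follow the three-stage separation announced just before the statement: reconstruct the desired configuration with the sliding-mode estimator ($\bar{\m{p}}_i\to\m{p}_i^*$, $\bar{\m{v}}_i\to\m{v}_c$), reconstruct the true state with the consensus estimator ($\hat{\m{p}}_i\to\m{p}_i$, $\hat{\m{v}}_i\to\m{v}_i$), and finally drive the physical agents to the reconstructed target with the PD-like feedback ($\m{p}_i\to\bar{\m{p}}_i$, $\m{v}_i\to\bar{\m{v}}_i$). Chaining the three limits gives $\m{p}_i\to\m{p}_i^*$ and $\m{v}_i\to\m{v}_c$, and $\m{g}_{ij}\to\m{g}_{ij}^*$ then follows from infinitesimal bearing rigidity exactly as in the proof of Theorem~\ref{Theorem 1}.

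First I would observe that the block \eqref{SMC 2a}--\eqref{SMC 2b} governing $(\bar{\m{p}}_i,\bar{\m{v}}_i)$ is \emph{autonomous}: it involves neither $\m{p}_i,\m{v}_i,\hat{\m{p}}_i,\hat{\m{v}}_i$ nor the physical input $\m{u}_i$, and it has exactly the structure of the closed loop treated in Theorem~\ref{Theorem 1}, with $(k_4,k_5)$ in place of $(k_1,k_2)$ and leader anchors $\bar{\m{p}}_i=\m{p}_i$, $\bar{\m{v}}_i=\m{v}_c$. Replaying that argument, I stack the follower sliding variables, use $\m{\phi}_F=\mcl{B}_{ff}(\bar{\m{p}}_F-\m{p}_F^*)$ with Lemma~\ref{Lemma 1}, and reduce to $\dot{\m{s}}_F=-\mcl{B}_{ff}(k_5\,\text{sign}(\m{s}_F)+\m{1}_f\otimes\m{\omega}_c)$ where $\|\m{\omega}_c\|\le\delta_2+k_4\delta_1<k_5$; Lemma~\ref{LemLem norm} then gives $\m{s}_F\to\m{0}$ in finite time, hence $\bar{\m{p}}_i\to\m{p}_i^*$ and $\bar{\m{v}}_i\to\m{v}_c$ in finite time.

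Next I would pass to the estimation errors $\tilde{\m{p}}_i=\hat{\m{p}}_i-\m{p}_i$ and $\tilde{\m{v}}_i=\hat{\m{v}}_i-\m{v}_i$. The key point is that $\m{u}_i$ cancels: subtracting $\dot{\m{v}}_i=\m{u}_i$ from \eqref{Consensus law2}, the consensus corrections (each pairing a communicated estimate with the measured relative state) collapse to $\dot{\tilde{\m{v}}}_i=k_3\sum_{j\in\mc{N}_i}(\tilde{\m{p}}_j-\tilde{\m{p}}_i)+k_6\sum_{j\in\mc{N}_i}(\tilde{\m{v}}_j-\tilde{\m{v}}_i)$, with $\dot{\tilde{\m{p}}}_i=\tilde{\m{v}}_i$. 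Since leaders satisfy $\tilde{\m{p}}_i=\tilde{\m{v}}_i=\m{0}$, stacking the followers and using the block form \eqref{eq:Laplacian} yields an autonomous linear system carrying the blocks $-k_3(\m{L}_{ff}\otimes\m{I}_d)$ and $-k_6(\m{L}_{ff}\otimes\m{I}_d)$. Diagonalizing the symmetric positive-definite $\m{L}_{ff}$ decouples the modes into $s^2+k_6\lambda s+k_3\lambda=0$ with $\lambda>0$, each Hurwitz for any $k_3,k_6>0$; hence $\tilde{\m{p}}_F,\tilde{\m{v}}_F\to\m{0}$ exponentially.

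Finally, set $\m{e}_i=\m{p}_i-\bar{\m{p}}_i$ and $\m{e}_{vi}=\m{v}_i-\bar{\m{v}}_i$. Substituting \eqref{Control 2} and $\dot{\bar{\m{v}}}_i=\bar{\m{u}}_i$ into $\dot{\m{v}}_i-\dot{\bar{\m{v}}}_i$ and writing $\hat{\m{p}}_i=\m{p}_i+\tilde{\m{p}}_i$, $\hat{\m{v}}_i=\m{v}_i+\tilde{\m{v}}_i$ gives $\dot{\m{e}}_i=\m{e}_{vi}$ and $\dot{\m{e}}_{vi}=-k_1\m{e}_i-k_2\m{e}_{vi}-(k_1\tilde{\m{p}}_i+k_2\tilde{\m{v}}_i)$. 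The nominal matrix has characteristic polynomial $s^2+k_2 s+k_1$, Hurwitz for $k_1,k_2>0$, and the forcing term vanishes by the previous stage, so this is a stable linear system driven by a decaying input and $\m{e}_i,\m{e}_{vi}\to\m{0}$. I expect the main subtlety to lie in the second stage: verifying that $\m{u}_i$ truly cancels so the estimation error decouples from the control loop, and then using $\m{L}_{ff}\succ\m{0}$ (guaranteed by the presence of leaders) to conclude that the second-order consensus is \emph{unconditionally} Hurwitz rather than merely marginally stable on a consensus subspace. Once the sliding-mode and consensus blocks are established as independent, finite-/exponentially-converging subsystems, the tracking stage reduces to a routine vanishing-perturbation argument for a Hurwitz system.
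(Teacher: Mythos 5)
Your proposal is correct and shares the paper's overall three-stage architecture --- sliding-mode target generator replaying Theorem~\ref{Theorem 1} with $(k_4,k_5)$, consensus observer, then PD tracking with a vanishing perturbation --- including the identical stage-2 eigenvalue argument ($\lambda^2+k_6\mu\lambda+k_3\mu=0$ over the spectrum of $\m{A}=\mcl{L}_{ff}\otimes\m{I}_d$, positive definite because of the leaders). Where you genuinely diverge is the final stage. The paper works in the coordinates $\m{x}=\m{p}_F-\m{p}_F^*$, $\m{y}=\m{v}_F-\m{v}_F^*$, so its perturbation $\m{z}(t)$ contains the term $\bar{\m{u}}_F-\m{1}_f\otimes\dot{\m{v}}_c$, and the proof must first assert that $\bar{\m{u}}_i\to\dot{\m{v}}_c$ exponentially --- a delicate claim for the discontinuous law $\bar{\m{u}}_i=-k_4\bar{\m{v}}_i-k_5\,\text{sign}(\m{s}_i)$, since once the sliding surface is reached the signum term equals the required value only in an equivalent-control (Filippov) sense. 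Your coordinates $\m{e}_i=\m{p}_i-\bar{\m{p}}_i$, $\m{e}_{vi}=\m{v}_i-\bar{\m{v}}_i$ cancel $\bar{\m{u}}_i$ exactly, leaving the forcing $k_1\tilde{\m{p}}_i+k_2\tilde{\m{v}}_i$, which vanishes exponentially by your stage 2; this sidesteps the equivalent-control subtlety entirely and is arguably cleaner, at the cost of one extra limit-chaining step ($\m{p}_i\to\bar{\m{p}}_i\to\m{p}_i^*$) that the paper's coordinates deliver directly as the state of the perturbed system. Two minor points. First, in stage 1 you claim $\bar{\m{p}}_i\to\m{p}_i^*$ in finite time, but only the sliding variable $\m{s}_F$ reaches zero in finite time; thereafter $\dot{\m{\phi}}_F=-k_4\m{\phi}_F$, so the convergence of $\bar{\m{p}}_F$ is exponential, which is what the paper asserts in \eqref{Like Theorem 1} --- exponential decay is all your vanishing-perturbation step needs, so nothing breaks. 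Second, in stage 2 you have silently corrected the signs in \eqref{Consensus law2} so the correction terms pair as $(\hat{\m{p}}_j-\m{p}_j)-(\hat{\m{p}}_i-\m{p}_i)$; this is clearly the intended reading, since it is the only one that produces the stable error system $\dot{\m{\delta}}=-k_3\m{A}\m{\gamma}-k_6\m{A}\m{\delta}$ that the paper itself writes down (the paper's stacked equation also has a $k_3$ where $k_6$ is meant), and your derivation in effect documents and repairs these typos.
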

\begin{proof}
Denote $\m{A}= \mcl{L}_{ff}\otimes \m{I}_d$, where $\mcl{L}_{ff}$ is defined as in \eqref{eq:Laplacian}. Note that the matrix $\mcl{L}_{ff}$ is symmetric positive-definite, and thus, $\m{A}$ is also symmetric positive-definite. From \eqref{SMC 2a}-\eqref{SMC 2b}, by following a similar process as in Theorem \ref{Theorem 1}, it is not hard to show that
\begin{align}\label{Like Theorem 1}
    {\bar{\m{p}}}_i\to \m{p}_i^*~\text{and}~ {\bar{\m{v}}}_i\to \m{v}_c
\end{align}
exponentially fast. Thus, $\bar{\m{u}}_i \to \dot{\m{v}}_c$ exponentially fast. 

We can rewrite the equations for $t\ge T$ by 
\begin{align*}
\dot{\m{p}}_F & = {\m{v}}_F \\
\dot{\m{v}}_F &= \bar{\m{u}}_F+ k_1 (\bar{\m{p}}_F - \hat{\m{p}}_F) + k_2 ( \bar{\m{v}}_F - \hat{\m{v}}_F) = \m{u}_F \\ 
\dot{\hat{\m{p}}}_F &= \hat{\m{v}}_F \\
\dot{\hat{\m{v}}}_F &=\m{u}_F + k_3  \m{A}(\hat{\m{p}}_F-\m{p}_F) + k_3  \m{A}(\hat{\m{v}}_F-\m{v}_F)
\end{align*}
where ${\m{v}}_F^* =  \m{1}_F \otimes \m{v}_c$
By letting $\m{\gamma} = {\m{p}}_F - \hat{\m{p}}_f$ and $\m{\delta} =  {\m{v}}_F - \hat{\m{v}}_f$, we can rewrite the system in the following form
\begin{align}
\dot{\m{\gamma}} &= \m{\delta} \\
\dot{\m{\delta}}  &= -k_3 \m{A}\m{\gamma} - k_6 \m{A} \m{\delta}.
\end{align}
Due to the fact that $\m{A}$ is positive definite and symmetric, it is not hard to see that the system $\m{\eta} = \m{M} \m{\eta}$, where $\m{\eta} = [{\m{\gamma}}^\top,~ \m{\delta}^\top]^\top$ and $\m{M} = \begin{bmatrix}
\m{0} & \m{I}_{df} \\  -k_3 \m{A} & - k_6 \m{A}
\end{bmatrix}
$. The eigenvalues of $\m{M}$ is determined from the characteristic equation $\text{det}(\lambda (\lambda\m{I}_{df} + k_6 \m{A}) + k_3 \m{A}) = 0$. Letting $0< \mu_1 \leq \ldots \leq \mu_{df}$ be eigenvalues of $\m{A}$, then the $2df$ eigenvalues of $\m{M}$ are roots of 
\begin{align*}
\lambda^2 + k_6 \mu_k \lambda + k_3 \mu_k = 0,~k=1, \ldots, df, 
\end{align*}
and it follows that Re$(\lambda_i)<0$, $\forall i = 1, 2, \ldots, 2df$. This implies $\m{\gamma}$ and $\m{\delta}$ converges to $\m{0}_{df}$ exponentially fast. Equivalently, $\hat{\m{p}}_F \to \m{p}_F$ and $\hat{\m{v}}_F \to \m{v}_F$ exponentially fast. 

Next, by writing $\m{u}_F = \m{1}_F \otimes \dot{\m{v}}_c + (\bar{\m{u}}_F - \m{1}_F \otimes \dot{\m{v}}_c) + k_1 (\m{p}_F^* - \m{p}_F + \m{p}_F - \hat{\m{p}}_F + \bar{\m{p}}_F - \m{p}_F^*) + k_2 ( {\m{v}}_F^* - \m{v}_F + \m{v}_F - \hat{\m{v}}_F + \bar{\m{v}}_F - \m{1}_F\otimes \m{v}_c)$, and letting $\m{x} = \m{p}_F - {\m{p}}_F^*$, $\m{y} = \m{v}_F - {\m{v}}_F^*$, we have
\begin{align}
\begin{bmatrix}
\dot{\m{x}}\\
\dot{\m{y}}
\end{bmatrix} = \begin{bmatrix}
\m{0}_{df} & \m{I}_{df}\\
- k_1 \m{I}_{df}  & - k_2 \m{I}_{df}
\end{bmatrix}
\begin{bmatrix}
{\m{x}}\\
{\m{y}}
\end{bmatrix} + \begin{bmatrix}
\m{0}_{df}\\
\m{z}(t)
\end{bmatrix}, \label{eq:xyz_system}
\end{align}
where $\m{z}(t) = (\bar{\m{u}}_F - \m{1}_F \otimes \dot{\m{v}}_c) + k_1(\bar{\m{p}}_F-\m{p}_F^*) + k_1\m{\gamma} + k_2(\bar{\m{v}}_F - \m{1}_F\otimes \m{v}_c) + k_2 \m{\delta}$. Due to the exponential stability of $\bar{\m{u}}_F - \m{1}_F \otimes \dot{\m{v}}_c$, $\bar{\m{p}}_F-\m{p}_F^*$, $\bar{\m{v}}_F - \m{1}_F\otimes \m{v}_c$, $\m{\gamma}$ and $\m{\delta}$, we have $\|\m{z}(t)\|\leq \|\bar{\m{u}}_F - \m{1}_F \otimes \dot{\m{v}}_c\| + k_1 \|\bar{\m{p}}_F-\m{p}_F^*\| + k_1\|\m{\gamma}\| + k_2\|\bar{\m{v}}_F - \m{1}_F\otimes \m{v}_c\| + k_2\| \m{\delta}\| \to 0$ exponentially fast. By considering $[\m{0}_{df}^\top,\m{z}^\top]$ as a vanishing input to the unforced system 
\begin{align}
\begin{bmatrix}
\dot{\m{x}}\\
\dot{\m{y}}
\end{bmatrix} = \begin{bmatrix}
\m{0}_{df} & \m{I}_{df}\\
- k_1 \m{I}_{df}  & - k_2 \m{I}_{df}
\end{bmatrix}
\begin{bmatrix}
{\m{x}}\\
{\m{y}}
\end{bmatrix}, \label{eq:xy_system}
\end{align}
and since the origin is an exponentially stable equilibrium of \eqref{eq:xy_system}, it follows from the input-to-state stability theory \cite{Khalil2002} that the system \eqref{eq:xyz_system} satisfies $\m{x} \to \m{0}_{df}$ and $\m{y} \to \m{0}_{df}$ asymptotically. This concludes the proof that the desired moving formation is asymptotically achieved.
%
%Denoting, $\m{x}={\bar{\m{p}}}_F-\m{p}_F$,  $\m{y}={\bar{\m{v}}}_F-\m{v}_F$, and $\m{z}=\m{p}_F-{\hat{\m{p}}}_F$, we rewrite \eqref{Control 2}-\eqref{Consensus law} as 
%%\begin{align}
%%\left(\begin{matrix}\dot{\m{x}}\\\dot{\m{y}}\\\dot{\m{z}}\\\end{matrix}\right)={\underbrace{\left(\begin{matrix}\m{0}&\m{I}_{df}&\m{0}\\-k_1\m{I}_{df}&-k_2\m{I}_{df}&-k_1\m{I}_{df}\\\m{0}&\m{0}&-\m{A}\\\end{matrix}\right)}_{\m{\Omega}}}\left(\begin{matrix}\m{x}\\\m{y}\\\m{z}\\\end{matrix}\right).
%%\end{align}
%Using the Schur complement theorem, we compute the characteristic polynomial  $P_\Omega(\lambda)$ of $\m{\Omega}$ as 
%\begin{align}
%{P_\Omega\left(\lambda\right)=\left(\lambda^2+k_2\lambda+k_1\right)}^{df}\det{\left(-\m{A}-\lambda \m{I}_{df}\right)}.
%\end{align}
%Because $\m{A}$ is positive-definite and $k_1, k_2>0$, we obtain that $\m{\Omega}$ is Hurwitz, this means $\m{x}, \m{y}, \m{z}\to \m{0}$ as $t\to\infty$. Combining with \eqref{Like Theorem 1}, we obtain $\m{p}_i\to \m{p}_i^*$ and $\m{v}_i\to \m{v}_c$,  as $t\to\infty$. Hence, we also have $\m{g}_{ij}\to \m{g}_{ij}^*$ and this completes the proof. 
\end{proof}

\begin{Remark} One can also design a finite-time estimator for \eqref{Consensus law1}--\eqref{Consensus law2} and finite-time control law instead of \eqref{Control 2} to make the overall desired formation  converges in finite time. 
\end{Remark}
\section{Simulation results}
In this section, we consider a five-agent formation in $\mathbb{R}^2$ with an interaction graph as shown in Fig.~\ref{fig: network}. The desired formation is specified by seven desired bearing vectors. The initial positions of the leaders satisfies their desired bearing constraint and they move with a same velocity. 

The simulation parameters are selected as follows
\begin{align}
\begin{cases}
\m{g}_{31}^*=\m{g}_{42}^*=\left[\begin{matrix}1\\0\\\end{matrix}\right],\ \m{g}_{32}^*=\m{g}_{54}^*=\frac{1}{\sqrt{2}}\left[\begin{matrix}1\\1\\\end{matrix}\right]\\
\m{g}_{34}^*=\left[\begin{matrix}0\\1\\\end{matrix}\right],\m{g}_{35}^*=\frac{1}{\sqrt{2}}\ \left[\begin{matrix}-1\\1\\\end{matrix}\right],\ \m{g}_{ij}^*=-\m{g}_{ji}^*\\
\m{p}_1\left(0\right)=\left[\begin{matrix}0\\0\\\end{matrix}\right],\m{p}_2\left(0\right)=\left[\begin{matrix}0\\2\\\end{matrix}\right]\ ,\ \m{v}_c=\left[\begin{matrix}1\\\sin{t}\\\end{matrix}\right].
\end{cases}
\end{align}
\begin{figure}
    \centering
    \includegraphics[width=.45\linewidth]{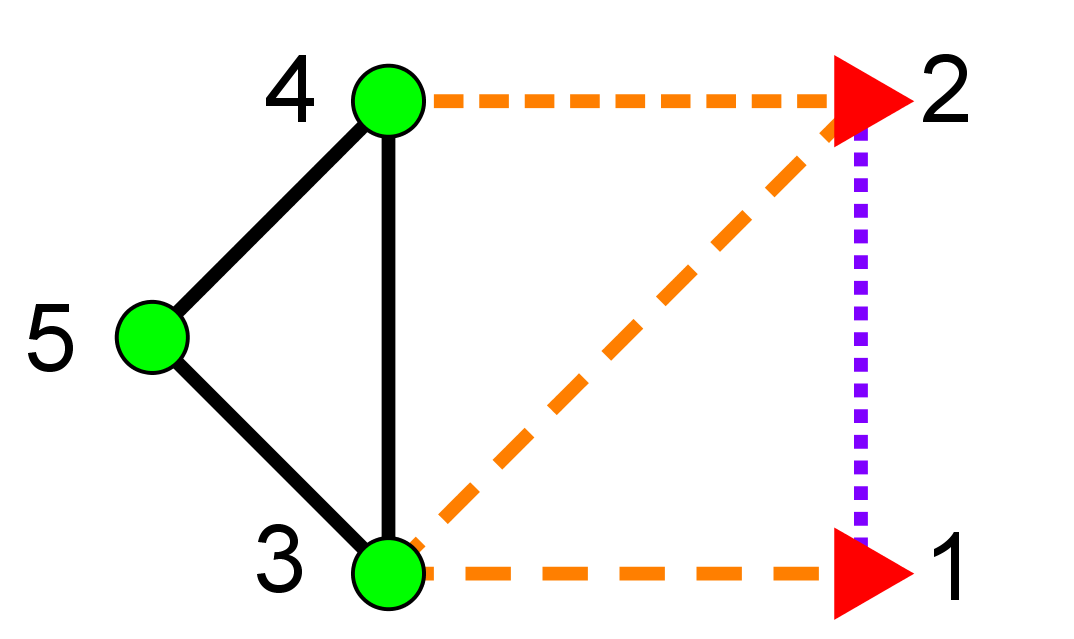}
    \caption{A five-agent network with 2 leaders (red triangles) and 3 followers (green circles).  }
    \label{fig: network}
\end{figure}

The initial positions of followers are randomly selected, and their initial velocities are chosen to be zero.  

In this section, two simulations are implemented for the five-agent formation. In the former (Sim. 1), we use the control law \eqref{Control 1} with parameters $k_1=0.5,~k_2=2$ while in the latter (Sim. 2), the control law \eqref{Control 2} is used with $k_1=k_2=k_3=1,~k_4=0.5,k_5=2$. The simulation results are shown in Fig. \ref{fig: tra1}-\ref{fig: v2}. Figures \ref{fig: tra1} and \ref{fig: tra2} depict the trajectories of agents where the initial and final positions are denoted by blue squares and red circles, respectively. As shown in Figs. \ref{fig: error1} and \ref{fig: error2}, the position errors $e_i=\Vert \m{p}_i-\m{p}_i^*\Vert^2$ and the bearing errors $e_{ij}=\Vert \m{g}_{ij}-\m{g}_{ij}^*\Vert^2$ tend to zero, which indicates that the bearing constraints are asymptotically satisfied, and the desired formation shape are obtained. The synchronization of agents’ velocities is illustrated in Figs. \ref{fig: v1} and \ref{fig: v2}, which guarantees that after obtaining the desired shape moving as a whole rigid formation. Thus, simulation results are consistent with the theoretical analysis. 

\section{Conclusions}
In this paper, we studied the bearing-based tracking control of leader-follower formations. Followers can measure the displacement, its velocity and exchange information with their neighbors. We introduce two sliding mode control-based control laws which solve the problem. Under the control laws, the desired formation shape is asymptotically achieved and the agents move with a common velocity. Our future work will focus on bearing-only control algorithms for the tracking control of leader-follower formations and consider further constraints on the agents' and network's model.
\begin{figure}[!t]
    \centering
    \includegraphics[width=\linewidth]{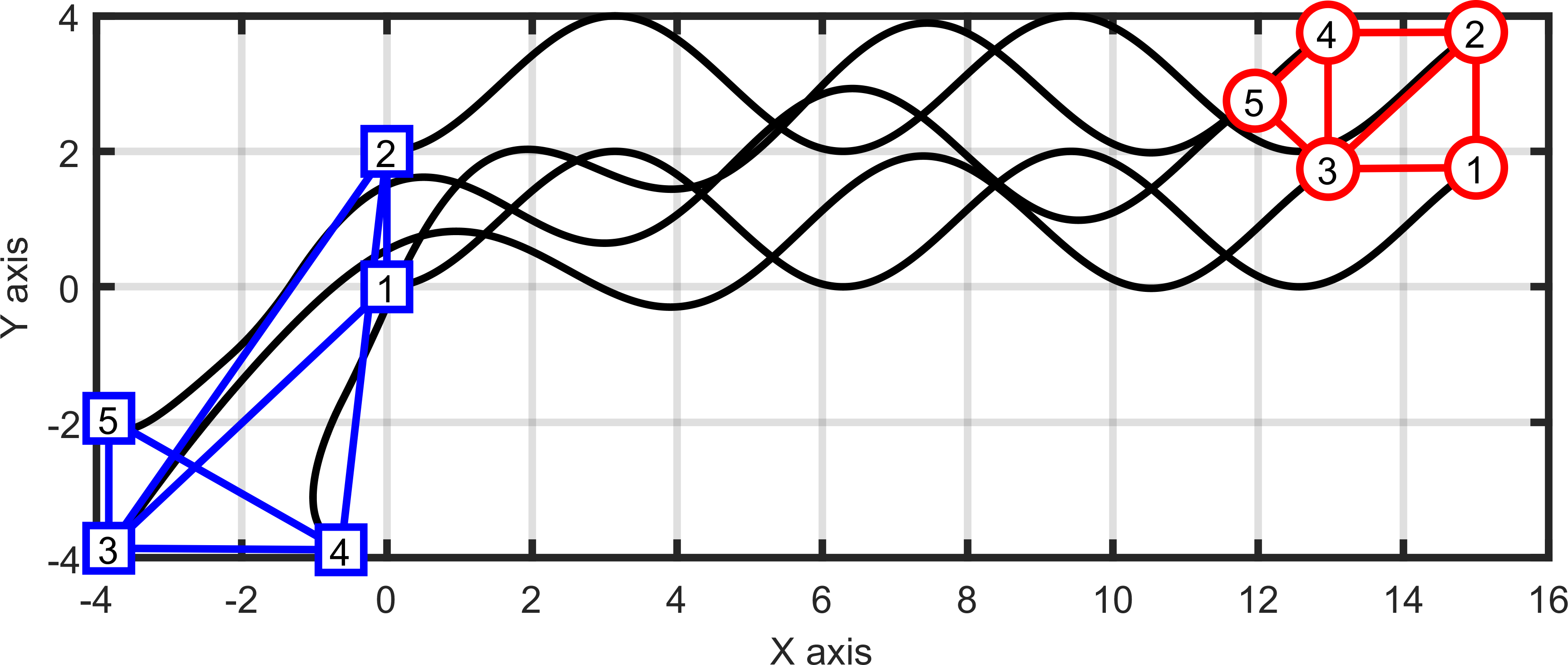}
    \caption{Sim 1: Trajectories of agents.}
    \label{fig: tra1}

    \includegraphics[width=\linewidth]{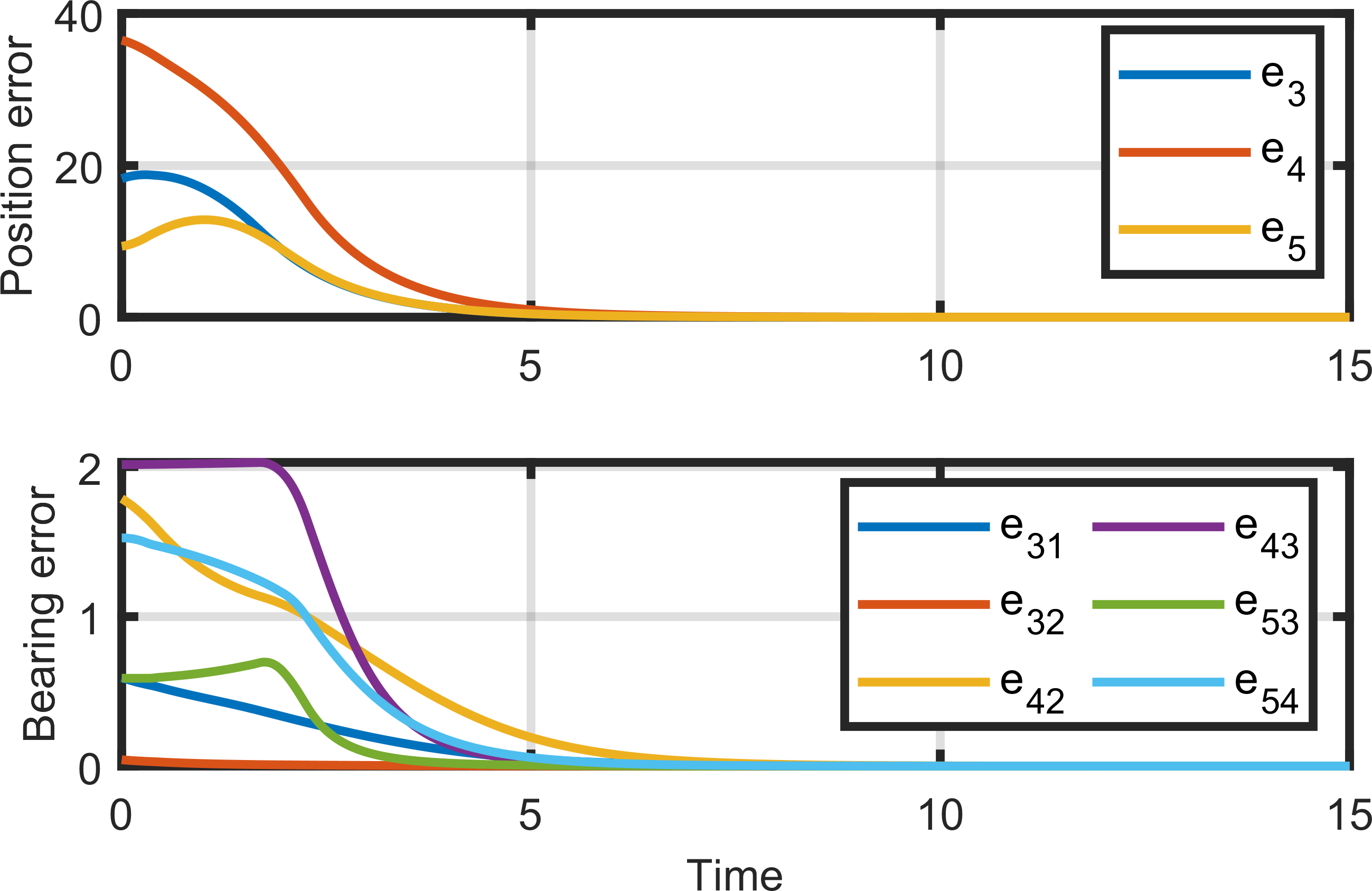}
    \caption{Sim 1: Position errors (above) and bearing errors (below) of agents.}
    \label{fig: error1}

    \includegraphics[width=\linewidth]{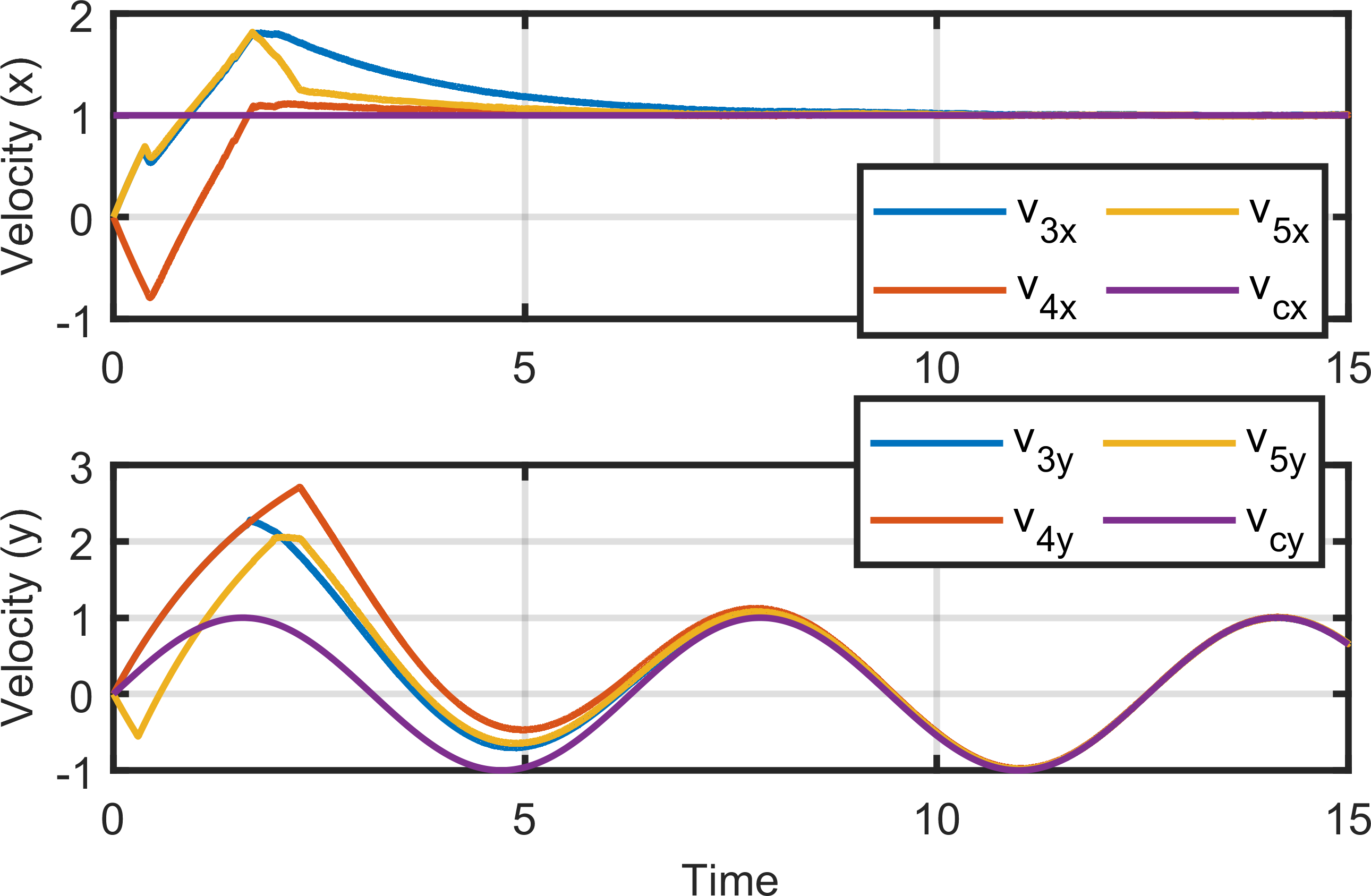}
    \caption{Sim 1: Agents’ velocity in Ox (above), Oy (below).}
    \label{fig: v1}
\end{figure}
\begin{figure}[th!]
    \centering
    \includegraphics[width=\linewidth]{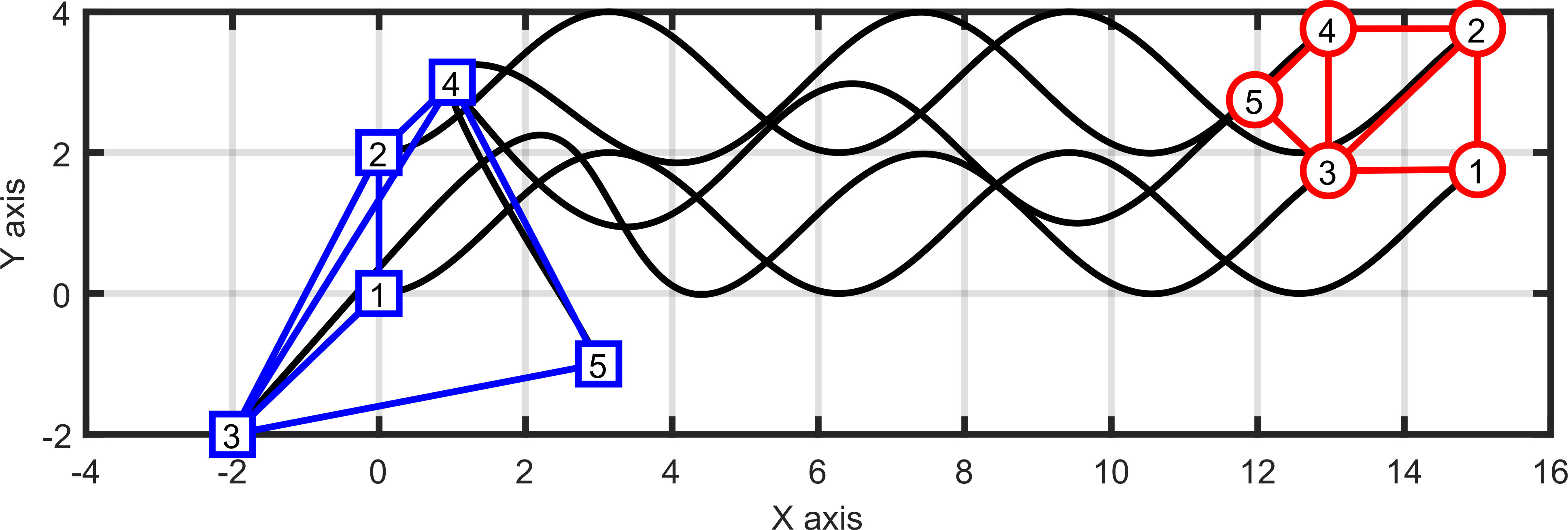}
    \caption{Sim 2: Trajectories of agents.}
    \label{fig: tra2}

    \includegraphics[width=\linewidth]{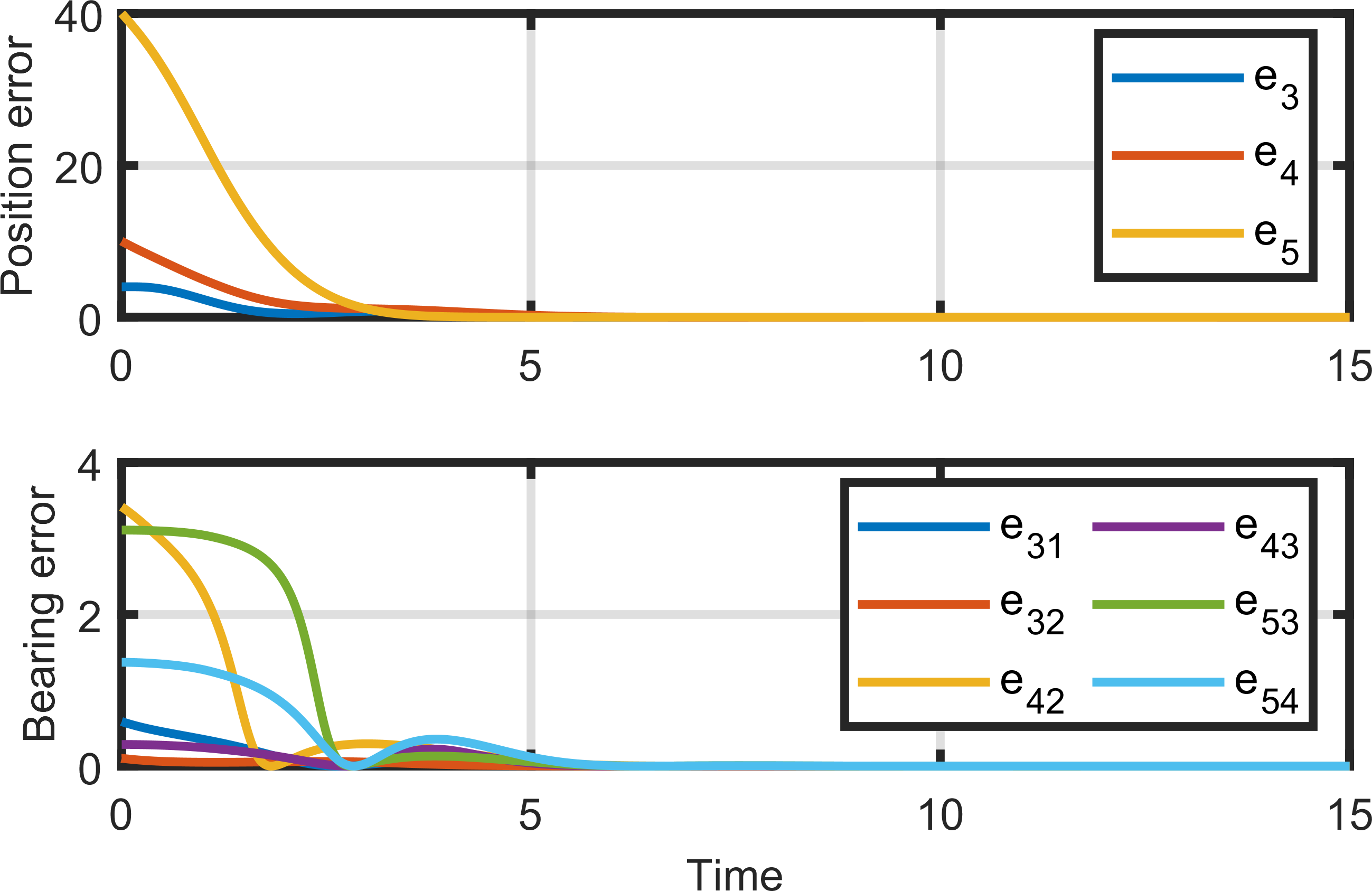}
    \caption{Sim 2: Position errors (above) and bearing errors (below) of agents.}
    \label{fig: error2}

    \includegraphics[width=1\linewidth]{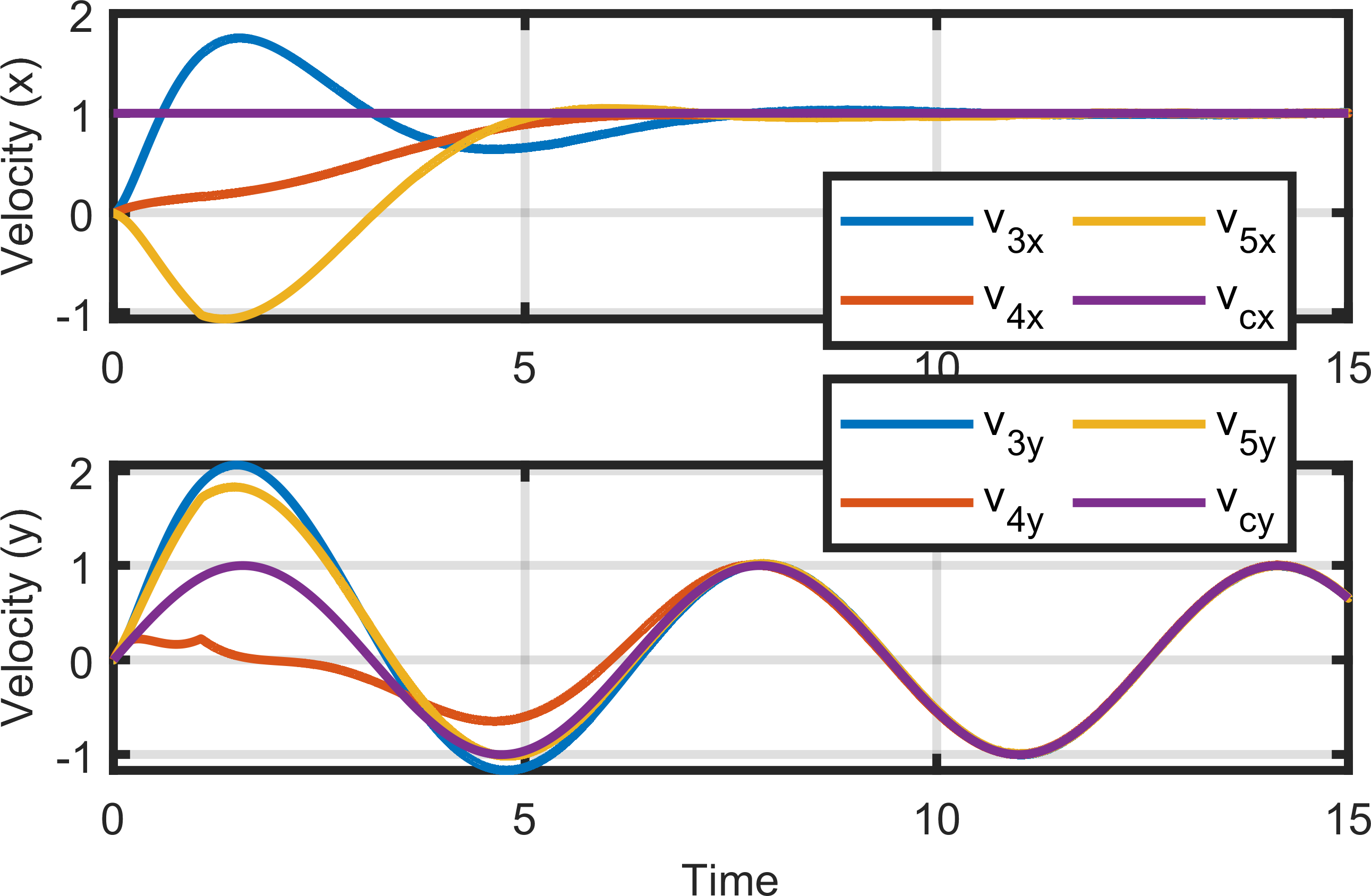}
    \caption{Sim 2: Agents’ velocity in Ox (above), Oy (below).}
    \label{fig: v2}
\end{figure}

\bibliographystyle{IEEEtran}
\bibliography{IEEEabrv,Ref}

\appendices
\section{Finite-time convergence theory}
%Consider the autonomous nonlinear system 
%\begin{align}
%\dot{\m{x}} = \m{f}(\m{x}),
%\end{align}
%where $\m{f}: \mc{D} \to \mc{D}$ is a locally, $\m{f](\m{0}) = \m{0}$,
\begin{Lemma} \label{lem:finite-time} \cite{Bhat2000} Let $\mc{D} \subseteq \mb{R}^d$ and $\m{x} \in \mc{D}$. Suppose there exists a continuous function $V(\m{x}): \mc{D} \to \mb{R}$ such that the following conditions hold
\begin{itemize}
\item[i)] $V(\m{x})$ is positive definite,
\item[ii)] If there exist $\kappa>0$, $\alpha \in (0,1)$, and an open neighborhood $\mc{U}_0 \in \mc{D}$ of the origin such that 
$$\dot{V}(\m{x}) + \kappa (V(\m{x}))^{\alpha } \leq 0, \forall \m{x} \in \mc{U}_0 \setminus \{\m{0} \},$$
\end{itemize}
then $V(\m{x})$ will reach zero in finite time with the settling time $T \leq V({0})^{1-\alpha}/(\kappa(1-\alpha))$.
\end{Lemma}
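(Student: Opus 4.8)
The plan is to reduce the Lyapunov differential inequality in condition (ii) to a one-dimensional scalar comparison and integrate it in closed form. First I would fix an initial condition $\m{x}(0)$ in a sublevel set $\{V \le c\}$ contained in $\mc{U}_0$ and study the scalar signal $W(t) = V(\m{x}(t))$ along the resulting trajectory. Since $V$ is positive definite and (ii) gives $\dot{W} \le -\kappa W^{\alpha} \le 0$ wherever $W > 0$, the sublevel sets of $V$ are forward invariant; consequently the trajectory never leaves $\mc{U}_0$ and the inequality remains valid for all $t \ge 0$. This settles the domain question that would otherwise allow a solution to escape the region in which the hypothesis is assumed.

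The core step is the substitution $U(t) = W(t)^{1-\alpha}$, which converts the nonlinear bound into a constant one. Whenever $W>0$ I would compute, using the chain rule,
\begin{align*}
\dot{U} = (1-\alpha) W^{-\alpha} \dot{W} \le (1-\alpha) W^{-\alpha}\bigl(-\kappa W^{\alpha}\bigr) = -\kappa(1-\alpha).
\end{align*}
Integrating this uniform bound from $0$ to $t$ yields $U(t) \le U(0) - \kappa(1-\alpha)t$, i.e.
\begin{align*}
V(\m{x}(t))^{1-\alpha} \le V(\m{x}(0))^{1-\alpha} - \kappa(1-\alpha)t.
\end{align*}
Because the left-hand side is nonnegative, the right-hand side cannot remain positive beyond $T = V(\m{x}(0))^{1-\alpha}/(\kappa(1-\alpha))$, which forces $W(t)=0$ at some time $t \le T$; positive definiteness of $V$ then gives $\m{x}(t)=\m{0}$ there, establishing the claimed settling-time bound. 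I would then confirm that the state stays at the origin for $t \ge T$: since $\dot{V} \le 0$ and $V$ is positive definite with $V(\m{0})=0$, the value $W$ cannot increase once it reaches zero, so convergence is genuinely finite-time rather than merely asymptotic.

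The step I expect to require the most care is the differentiation of $U$ and the subsequent integration, because $V$ is assumed only continuous, so $W(t)$ need not be classically differentiable and $\dot{W}$, $\dot{U}$ should be read as upper right Dini derivatives. The rigorous route is therefore to interpret the bound $\dot{W}\le -\kappa W^{\alpha}$ in the Dini sense and invoke the comparison lemma (comparing $W$ with the solution of the scalar initial value problem $\dot{y}=-\kappa y^{\alpha}$, $y(0)=V(\m{x}(0))$), whose explicit solution reaches zero exactly at $T$. Once this comparison is in place the explicit integration above is routine, so the main obstacle is purely the nonsmooth analysis, not the algebra.
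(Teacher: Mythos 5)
The paper states this lemma without proof, citing \cite{Bhat2000}; your argument is correct and is essentially the standard proof in that reference --- comparison with the scalar system $\dot{y} = -\kappa y^{\alpha}$, whose explicit solution $y(t) = \bigl(y(0)^{1-\alpha} - \kappa(1-\alpha)t\bigr)^{1/(1-\alpha)}$ vanishes exactly at the claimed settling time. You also correctly flag the two genuinely delicate points: forward invariance of a small sublevel set of $V$ contained in $\mc{U}_0$ (so the hypothesis keeps applying along the trajectory), and reading $\dot{V}$ as an upper Dini derivative with the comparison lemma, since $V$ is only assumed continuous and solutions need not be unique.
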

\end{document}